\documentclass[pra,twocolumn,superscriptaddress,amsfonts,amssymb,amsmath,nofootinbib]{revtex4-2}
\pdfoutput=1
\newcommand{\externalizetikz}{0}   
\newcommand{\dowordcount}{0}

\usepackage{hyperref}
\hypersetup{colorlinks,urlcolor=blue,linkcolor=black,citecolor=blue}
\usepackage[capitalize,nameinlink]{cleveref}
\usepackage[normalem]{ulem}
\usepackage[utf8]{inputenc}
\usepackage{amsthm}
\usepackage{mathtools,mymath}
\newtheorem{theorem}{Theorem}
\newtheorem{definition}{Definition}

\DeclareMathOperator{\swap}{SWAP}
\DeclareMathOperator{\vecc}{vec}

\usepackage{xcolor}
\usepackage[caption=false]{subfig}

\usepackage{braket}
\usepackage{qcircuit}
\usepackage{environ}
\newcommand{\myqctmp}[2][0.25]{\Qcircuit @C=#2em @R=#1em @!R}
\NewEnviron{myqcircuit}[1][0.25]{\vcenter{\myqctmp[#1]{0.75} {\BODY}}}
\NewEnviron{myqcircuit*}[2]{\vcenter{\myqctmp[#1]{#2} {\BODY}}}
\newcommand{\controlsq}{*!<0em,.025em>-=-<0em>{\square}} 
\newcommand{\ctrlsq}[1]{\controlsq \qwx[#1] \qw}         

\usepackage{environ}
\ifnum\dowordcount=1
  \NewEnviron{myequation}{}
  \NewEnviron{myequation*}{}
  \NewEnviron{myalign}{}
  \NewEnviron{myalign*}{}
\else
  \NewEnviron{myequation}{\begin{equation}\BODY\end{equation}}
  \NewEnviron{myequation*}{\begin{equation*}\BODY\end{equation*}}
  \NewEnviron{myalign}{\begin{align}\BODY\end{align}}
  \NewEnviron{myalign*}{\begin{align*}\BODY\end{align*}}
\fi

\definecolor{myblue}{rgb}{0,0.4470,0.7410}
\definecolor{myred}{rgb}{0.8500,0.3250,0.0980}
\definecolor{myorange}{rgb}{0.9290,0.6940,0.1250}
\definecolor{mypurple}{rgb}{0.4940,0.1840,0.5560}
\definecolor{mygreen}{rgb}{0.4660,0.6740,0.1880}
\definecolor{mylightblue}{rgb}{0.3010,0.7450,0.9330}
\definecolor{mydarkred}{rgb}{0.6350,0.0780,0.1840}

\ifnum\externalizetikz=2
  \usepackage{tikzexternal}
  \tikzexternalize
  \tikzsetexternalprefix{fig/}
\else
  \usepackage{tikz}
  \usepackage{pgfplots}
  \usetikzlibrary{calc}
  \usetikzlibrary{positioning}
  \usetikzlibrary{plotmarks}
  \pgfplotsset{
    compat=newest,
    table/header=false,
    tick label style={font=\footnotesize},
    label style={font=\small},
    legend style={font=\footnotesize},
    legend cell align=left,
    colormap={parula}{
      rgb255=(53,42,135)
      rgb255=(15,92,221)
      rgb255=(18,125,216)
      rgb255=(7,156,207)
      rgb255=(21,177,180)
      rgb255=(89,189,140)
      rgb255=(165,190,107)
      rgb255=(225,185,82)
      rgb255=(252,206,46)
      rgb255=(249,251,14)
    }
  }
  \pgfplotsset{
    myColOne/.style={myblue},
    myColTwo/.style={myred},
    myColThr/.style={myorange},
    myColFou/.style={mypurple},
    myColFiv/.style={mygreen},
    myColSix/.style={mylightblue},
    myColSev/.style={mydarkred}
  }
  \ifnum\externalizetikz=1
    \usepgfplotslibrary{external}
    \tikzexternalize[prefix=fig/]
  \fi
\fi


\usepackage{tabularx,multirow,booktabs}
\newcolumntype{L}{>{\raggedright\arraybackslash}X}
\newcolumntype{C}{>{\centering\arraybackslash}X}
\newcolumntype{R}{>{\raggedleft\arraybackslash}X}

\begin{document}

\title{FABLE: Fast Approximate Quantum Circuits for Block-Encodings}

\author{Daan Camps}
\email{dcamps@lbl.gov}
\affiliation{National Energy Research Scientific Computing Center,
            Lawrence Berkeley National Laboratory,
            Berkeley, CA 94720, USA}
            
\author{Roel~Van~Beeumen}
\email{rvanbeeumen@lbl.gov}
\affiliation{Applied Mathematics and Computational Research Division,
            Lawrence Berkeley National Laboratory,
            Berkeley, CA 94720, USA}

\date{\today}

\begin{abstract}
Block-encodings of matrices have become an essential element of quantum algorithms derived from the
quantum singular value transformation. 
This includes a variety of algorithms ranging from the quantum linear systems
problem to quantum walk, Hamiltonian simulation, and quantum machine learning.
Many of these algorithms achieve optimal complexity in terms of black box matrix oracle queries, but so far the
problem of computing quantum circuit implementations for block-encodings of matrices has been
under-appreciated. In this paper we propose FABLE, a method to generate approximate quantum circuits for
block-encodings of matrices in a fast manner.
FABLE circuits have a simple structure and are directly formulated in terms of one- and two-qubit gates. 
For small and structured matrices they are feasible in the NISQ era, and the circuit parameters can
be easily generated for problems up to fifteen qubits.
Furthermore, we show that FABLE circuits can be compressed and sparsified. 
We provide a compression theorem that
relates the compression threshold to the error on the block-encoding.
We benchmark our method for Heisenberg and Hubbard Hamiltonians, and Laplacian operators to 
illustrate that they can be implemented with a reduced gate complexity without approximation error.
\end{abstract}

\maketitle

\section{Introduction}
The quantum singular value transformation (QSVT) \cite{gisu2018, gisu2019} combines
and extends qubitization \cite{loch2019} and quantum signal processing \cite{loch2017}.
The QSVT provides a unifying framework encompassing many quantum algorithms that
provide a speed-up over the best known classical algorithm~\cite{Martyn2021}.
All quantum algorithms derived from the quantum singular value transformation ultimately 
rely on the notion of a \textit{block-encoding} of some matrix $A$ that represents
the problem at hand. This matrix can, for example, be the Hamiltonian of
the quantum system to be simulated~\cite{Berry2015},
the discriminant matrix of the Markov
chain in a quantum walk~\cite{szeg2004, quant-ph/0401053},
or the matrix to be solved for in the quantum
linear systems problem~\cite{Childs2017}.
These matrices are, in general, non-unitary operators and cannot be directly run
on a quantum computer that only performs unitary evolution.
This constraint is usually overcome by enlarging the Hilbert space and
embedding the non-unitary operator in a specific state of the ancillary
qubits.
The most commonly used embedding is a block-encoding where the system matrix
$A$ is embedded in the leading principal block of a larger unitary matrix acting
on the full Hilbert space:
\begin{myequation}
U = \begin{bmatrix} A & *\, \\ * & *\, \end{bmatrix},
\label{eq:BE}
\end{myequation}
where $*$ indicate arbitrary matrix elements.
We assume that $\|A\|_{2} \leq 1$
since otherwise such an embedding cannot exist.
In this case, we say that $U$ block encodes $A$.

Thus far, complexity results for quantum algorithms derived from the quantum singular
value transformation have been formulated in terms of query complexity, i.e.,
how many queries to the block-encoding $U$ are required to solve the problem.
The question how a quantum circuit can be generated for \cref{eq:BE} 
has so far been under appreciated.
Encoding schemes for sparse matrices have been proposed~\cite{Childs2017,gisu2018} and
ultimately rely on sparse access oracles for which a quantum circuit implementation can
be challenging.
Some explicit circuit implementations for quantum walks on highly-structured
graphs are provided in~\cite{Loke2017} and are closely connected to block-encodings~\cite{gisu2018}.
Similarly, \cite{yang22} shows how to generate quantum circuits for block-encoding
certain specific $2^n \times 2^n$ sparse matrices in poly$(n)$ complexity.

In this paper we take a more general approach and propose \emph{FABLE}, which
stands for Fast Approximate BLock-Encodings. FABLE is 
an efficient algorithm to generate quantum circuits that block encode arbitrary matrices 
up to prescribed accuracy.
FABLE circuits consist of a \emph{matrix query oracle}, that we implement
with  simple one-qubit $R_y$ and $R_z$ rotations and two-qubit CNOT gates,
and some additional Hadamard and SWAP gates.
The gate complexity of a FABLE circuit for general, unstructured $N \times N$
matrices is bounded by $\bigO(N^2)$ gates with a modest prefactor of 2 
for real-valued matrices (4 for complex-valued matrices), and a 
limited polylogarithmic overhead. 
In this sense, FABLE circuits are a quantum circuit representation of dense matrices 
with an optimal asymptotic gate complexity.
However, this gate complexity scales exponentially in the number of qubits for
generic dense matrices as encoding an unstructured matrix of exponential
dimension is a difficult task.
Luckily, more relevant problems usually contain a lot of structure
and, as we will show, FABLE circuits can be compressed which often leads
to significantly reduced gate complexities for many problems of interest.
This process can be interpreted as a \emph{sparsification} of the circuit and matrix.
However, the FABLE algorithm applies a Walsh-Hadamard transformation to the matrix data
and performs the sparsification in the Walsh-Hadamard domain.
Sparse FABLE matrices thus corresponds to matrices that contain many zeros in this domain,
which does not necessarily correspond to sparsity in the original domain.
Thanks to these characteristics, FABLE circuits are well-suited for implementing
quantum algorithms derived from the QSVT in the NISQ era and beyond.

The remainder of this paper is organized as follows.
\Cref{sec:BE} formally defines the concept of block-encodings. \Cref{sec:circ}
presents a circuit structure for block-encodings in terms of a matrix query
oracle and gives a naive implementation of this oracle.
\Cref{sec:FABLE} introduces the improved circuit implementation for matrix
query oracles that are used in FABLE circuits for real- and complex-valued
matrix data.
In \Cref{sec:approx}, we extended the definition of block-encodings to
approximate block-encodings and we discuss how FABLE circuits can be compressed.
We relate the threshold in the compression algorithm to the error on the block-encoding.
\Cref{sec:ex} provides some examples that show that FABLE can be used
to block encode Heisenberg and Hubbard Hamiltonians and discretized differential
operators with a significantly reduced gate complexity.

Implementations of the FABLE algorithm built on top of QCLAB~\cite{qclab, qclabpp}
and Qiskit~\cite{Qiskit} are made publicly available on \url{https://github.com/QuantumComputingLab/fable}.
Without loss of generality, we assume in the remainder of this paper that
the matrix size is $N \times N$ with $N = 2^n$.

\section{Block-encodings}\label{sec:BE}

A \emph{block-encoding} of a non-unitary matrix is the embedding of a properly scaled
version of that matrix in the leading principal block of a bigger unitary
\cite{gisu2018,gisu2019}.
A formal definition for a block-encoding of an $n$-qubit matrix $A$
in an $m$-qubit unitary $U$ is as follows.

\begin{definition}
\label{def:BE}
Let $a,n, m \inN$, $m = a + n$.
Then an $m$-qubit unitary $U$ is a ${(\alpha, a)}$-block-encoding 
of an $n$-qubit operator $A$ if
\begin{myequation}
\tilde A = \left(\bra{0}^{\otimes a} \otimes \eI_{n} \right) U
               \left(\ket{0}^{\otimes a} \otimes \eI_{n} \right).
\end{myequation}
and $A = \alpha \tilde A$.
\end{definition}

The parameters $(\alpha, a)$ are, respectively, 
the \emph{subnormalization factor} necessary for encoding matrices of arbitrary norm
and the number of \emph{ancilla} qubits used in the block-encoding.
Since $\normtwo{U} = 1$, we have that $\normtwo{\tilde A} \leq 1$ and therefore $\normtwo{A} \leq \alpha$.
Every unitary is already a trivial ${(1,0)}$-block-encoding of itself 
and every non-unitary matrix can be embedded in a ${(\normtwo{A},1)}$-block-encoding~\cite{QI:Alber:2001}.
This does not guarantee the existence of an efficient quantum circuit implementation, but merely
considers the matrix representation.

For a block-encoding, we say that $\tilde A$ is the partial trace of $U$ over the zero state of the ancilla space. 
This naturally partitions the Hilbert space $\cH_m$ into $\cH_a \otimes \cH_n$.
Given an $n$ qubit state, $\ket{\psi} \in \cH_n$,
the action of $U$ on $\ket{\phi} = \ket{0}^{\otimes a} \otimes \ket{\psi}$ becomes
\begin{myequation}
U \ket{\phi} = \ket{0}^{\otimes a} \otimes \tilde A \ket{\psi}
+ \sqrt{1 - \norm{ \tilde A \ket{\psi}}^2} \ket{\sigma^{\perp}},
\end{myequation}
with
\begin{align}
\left(\bra{0}^{\otimes a} \otimes \eI_{n}\right) \ket{\sigma^\perp} &= 0, \\
\norm[\big]{\ket{\sigma^\perp}} &= 1,
\end{align}
and $\ket{\sigma^\perp}$ the normalized state for which the ancilla register 
has a state orthogonal to $\ket{0}^{\otimes a}$.
With probability $\norm{\tilde A \ket{\psi}}^2$, a partial measurement
of the ancilla qubits results in $0^{\otimes a}$ and the signal qubits are
in the target state $\tilde A \ket{\psi}/\norm{\tilde A \ket{\psi}}$.
Using amplitude amplification~\cite{grover, gisu2019}, this process must be repeated 
$1/\norm{\tilde A \ket{\psi}}$ times for success on average.

\begin{figure}[hbtp]
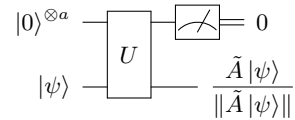

\centering\ifnum\dowordcount=0
\[\begin{myqcircuit*}{1.25}{1}
\lstick{\ket{0}^{\otimes a}} & \multigate{1}{U} & \meter & \rstick{0}\cw \\
\lstick{\ket{\psi}}        & \ghost{U}        & \rstick{\Frac{\tilde A\ket\psi}{\norm{\tilde A\ket\psi}}}\qw
\end{myqcircuit*}\]
\fi
\caption{Abstract quantum circuit for an ${(\alpha, a)}$ block-encoding $U$ of $A$.
The lower quantum wire carries the \emph{signal} qubits, the upper wire are the \emph{ancilla} qubits. 
If the ancilla register is measured in the zero state, the signal register 
is in the desired 
state ${\tilde A\ket\psi}/{\norm{\tilde A\ket\psi}}$.
\label{fig:BE}}
\end{figure}

\cref{fig:BE} provides the high-level structure of a quantum circuit for a block-encoding.
We note that an encoding of a matrix can be coupled to any other state of the
ancilla space besides the all-zero state and this state does not even have to
be a computational basis state. This generalization is discussed in~\cite{gisu2018}.
Encoding the matrix data in the all-zero state of the
ancilla qubits has become the most widely used choice as this leads to an embedding
of the matrix in the leading block of the unitary matrix, i.e., a block-encoding.

\section{Quantum Circuits for Block-Encodings}
\label{sec:circ}

Quantum circuits for block-encodings of sparse matrices are often presented in
terms of query oracles that provide information about the position
and binary description of the matrix entries~\cite{gisu2019,Childs2017}.
These oracles can be combined into a \emph{matrix query oracle}~\cite{lin2022}
that provides access to the matrix data.

Our approach is different, 
we immediately define a matrix query operation $O_A$ for a given matrix $A$
and consequently discuss how
this oracle can be directly synthesized in a quantum circuit.

\begin{definition}
\label{def:OA}
Let $a_{ij}$ be the elements of an $N \times N$ matrix $A$ with $N = 2^n$,
$\|a_{ij}\| \leq 1$. 
Then the matrix query operation $O_A$ applies
\begin{equation}
O_A \ket0 \ket{i} \ket{j} =
  \left( a_{ij} \ket0 + \sqrt{1 - |a_{ij}|^2} \ket1 \right) \ket{i} \ket{j},
\label{eq:OA}
\end{equation}
where $\ket{i}$ and $\ket{j}$ are $n$-qubit computational basis states.
\end{definition}

A high-level quantum circuit to block-encode a matrix $A$ is proposed in
\cref{fig:BEcirc}, where $H^{\otimes n}$ is an $n$-qubit Hadamard transformation
that creates an equal superposition over the row qubits,
the matrix query unitary $O_A$ is given in \eqref{eq:OA}, and the $2n$-qubit $\swap$
gate is implemented as $\swap\ket{i}\ket{j} = \ket{j}\ket{i}$.
This circuit is closely related to similar circuits in~\cite{lin2022},
but we encode all information about the matrix in a single matrix query oracle.

\begin{figure}[hbtp]
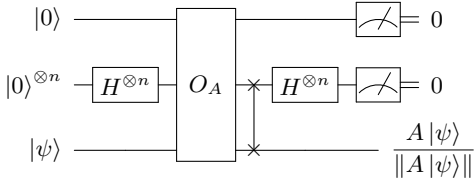

\centering\ifnum\dowordcount=0
\[
\begin{myqcircuit}[1.25]
\lstick{\ket{0}}             & \qw                  & \multigate{2}{O_A} &
  \qw                        & \qw                  & \meter & \rstick{0}\cw \\
\lstick{\ket{0}^{\otimes n}} & \gate{H^{\otimes n}} &        \ghost{O_A} &
  \qswap                     & \gate{H^{\otimes n}} & \meter & \rstick{0}\cw \\
\lstick{\ket{\psi}}          & \qw                  &        \ghost{O_A} &
  \qswap\qwx                 & \qw                  &
  \rstick{\Frac{A\ket\psi}{\norm{A\ket\psi}}}\qw
\end{myqcircuit}
\]
\fi
\caption{High-level quantum circuit structure for block-encoding a matrix $A$ 
in terms of a matrix query oracle $O_A$.
If the $n+1$ ancilla qubits are measured in the zero state, the signal register 
is in the desired 
state ${\tilde A\ket\psi}/{\norm{\tilde A\ket\psi}}$.
\label{fig:BEcirc}}
\end{figure}

The following theorem 
ascertains that the circuit from \cref{fig:BEcirc}
is indeed a block-encoding of the target matrix $A$.
Our proof follows a similar reasoning to~\cite{lin2022} and is 
included for completeness.

\begin{theorem}
\label{thm:UA}
The circuit $U_A$ in \cref{fig:BEcirc} is an $(1/2^n, n+1)$-block-encoding
of the $n$-qubit matrix $A$ if $|a_{ij}| \leq 1$.
\end{theorem}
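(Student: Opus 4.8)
The plan is to verify \cref{def:BE} directly by computing the leading principal block of the unitary $U_A$ realized by \cref{fig:BEcirc}. First I would record that $U_A$ is genuinely unitary: it is a composition of the two $H^{\otimes n}$ layers acting on the row register, the $\swap$ of the row and signal registers, and the oracle $O_A$, each of which is unitary. The hypothesis $|a_{ij}|\le 1$ is exactly what makes the entries $\sqrt{1-|a_{ij}|^2}$ of $O_A$ in \cref{def:OA} real, so $O_A$ is a well-defined unitary. The flag qubit together with the $n$-qubit register that is Hadamard-transformed play the role of the $a=n+1$ ancilla qubits, while the register carrying $\ket{\psi}$ holds the $n$ signal qubits. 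By \cref{def:BE} it then suffices to show that the leading principal block equals $2^{-n}A$, i.e.\ $\tilde A = 2^{-n}A$.

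Second, I would propagate the input $\ket{0}\ket{0}^{\otimes n}\ket{\psi}$, with $\ket{\psi}=\sum_j\psi_j\ket{j}$, through the four gate layers. The first $H^{\otimes n}$ creates the uniform superposition $2^{-n/2}\sum_i\ket{0}\ket{i}\ket{\psi}$ over the row register; $O_A$ attaches the amplitude $a_{ij}$ to the $\ket{0}$ branch of the flag and $\sqrt{1-|a_{ij}|^2}$ to the $\ket{1}$ branch; the $\swap$ exchanges the row register $\ket{i}$ with the signal register $\ket{j}$; and the second $H^{\otimes n}$, now acting on the column index, produces the Walsh phases. Collecting everything yields
\begin{align*}
U_A\ket{0}\ket{0}^{\otimes n}\ket{\psi}
&= \frac{1}{2^n}\sum_{i,j,k}\psi_j\,(-1)^{j\cdot k} \\
&\quad\times\bigl(a_{ij}\ket{0}+\sqrt{1-|a_{ij}|^2}\,\ket{1}\bigr)\ket{k}\ket{i},
\end{align*}
where $j\cdot k$ denotes the bitwise inner product modulo two and the two Hadamard layers contribute the prefactor $2^{-n/2}\cdot 2^{-n/2}=2^{-n}$.

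Third, I would apply the ancilla projector $\bra{0}^{\otimes(n+1)}\otimes\eI_n$. Projecting the flag onto $\ket{0}$ discards the orthogonal $\ket{1}$ component and keeps the amplitudes $a_{ij}$, while projecting the middle register onto $\ket{0}^{\otimes n}$ selects $k=0$; since $j\cdot 0=0$, every phase $(-1)^{j\cdot k}$ becomes $1$ and the sum over the column index collapses to $\sum_j a_{ij}\psi_j=(A\ket{\psi})_i$ on the signal register, which, thanks to the $\swap$, now carries the row index $i$. The surviving component is therefore $\ket{0}^{\otimes(n+1)}\otimes 2^{-n}A\ket{\psi}$, so the leading principal block is $\tilde A=2^{-n}A$; since $\normtwo{\tilde A}=\normtwo{A}/2^n\le 1$ this is consistent with $\normtwo{U_A}=1$, and it exhibits $U_A$ as the claimed block-encoding on $n+1$ ancilla qubits.

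The step I expect to be the crux is the phase bookkeeping across the two Hadamard layers, together with correctly accounting for the role of the $\swap$: it is the $\swap$ that moves the column index underneath the second $H^{\otimes n}$, so that projecting onto $k=0$ executes the summation $\sum_j a_{ij}\psi_j$ realizing the matrix--vector product, and simultaneously routes the row index onto the output register. Once one is convinced that projecting the middle register onto the all-zeros string trivializes $(-1)^{j\cdot k}$ for every $j$, the remainder is a routine linearity computation.
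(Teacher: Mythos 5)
Your proof is correct and follows essentially the same route as the paper's: a direct, layer-by-layer verification that the leading block of $U_A$ equals $2^{-n}A$, with the hypothesis $|a_{ij}|\le 1$ entering only to make $O_A$ well defined. The sole difference is bookkeeping --- the paper propagates the bra $\bra{0}\bra{0}^{\otimes n}\bra{i}$ through the final Hadamard layer (so Walsh phases never appear, since that layer hits a $\ket{0}^{\otimes n}$ register), whereas you push the ket forward through all four layers and let the phases $(-1)^{j\cdot k}$ trivialize once the ancilla projection forces $k=0$; both computations are otherwise identical.
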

\begin{proof}
The circuit $U_A$ can be written in matrix notation as:
\begin{equation*}
    U_A = (\eye[1] \otimes H^{\otimes n} \otimes \eye[n])
          (\eye[1] \otimes \swap) O_A
          (\eye[1] \otimes H^{\otimes n} \otimes \eye[n]).
\end{equation*}
For $U_A$ to be an $(1/2^n, n+1)$-block-encoding of $A$, we need
to verify according to \Cref{def:BE} that
\begin{equation*}
    \bra{0} \bra{0}^{\otimes n} \bra{i} \, U_A \, \ket{0} \ket{0}^{\otimes n} \ket{j} =
    \frac{1}{2^n} a_{ij}.
\end{equation*}
On one hand, we have
\begin{align*}
    \ket{0} \ket{0}^{\otimes n} \ket{j} & \\
    \quad        \xrightarrow{H^{\otimes n}} & \ \frac{1}{\sqrt{2^n}} \sum_{k=0}^{2^n-1} \ket{0} \ket{k} \ket{j}, \\
             \xrightarrow{O_A} & \ \frac{1}{\sqrt{2^n}} \sum_{k=0}^{2^n-1} \left(a_{kj} \ket{0} + \sqrt{1 - |a_{kj}|^2} \ket{1} \right) \ket{k} \ket{j}, \\
             \xrightarrow{\swap} & \ \frac{1}{\sqrt{2^n}} \sum_{k=0}^{2^n-1} \left(a_{kj} \ket{0} + \sqrt{1 - |a_{kj}|^2} \ket{1} \right) \ket{j} \ket{k},
\end{align*}
while on the other hand, we have
\begin{equation*}
    \ket{0} \ket{0}^{\otimes n} \ket{i}  \xrightarrow{H^{\otimes n}} \ \frac{1}{\sqrt{2^n}} \sum_{\ell=0}^{2^n-1} \ket{0} \ket{\ell} \ket{i}.
\end{equation*}
Combining both, yields
\begin{align*}
    \bra{0} \bra{0}^{\otimes n} & \bra{i} \, U_A \, \ket{0} \ket{0}^{\otimes n} \ket{j} \\
    = & \ \frac{1}{2^n} 
    \left( \sum_{\ell=0}^{2^n-1} \bra{0} \bra{\ell} \bra{i} \right) \\
    & \ \quad \ \left( \sum_{k=0}^{2^n-1} \left(a_{kj} \ket{0} + \sqrt{1 - |a_{kj}|^2} \ket{1} \right) \ket{j} \ket{k} \right), \\
    = & \ \frac{1}{2^n} \sum_{\ell=0}^{2^n-1} \sum_{k=0}^{2^n-1} a_{kj} \braket{\ell | j} \braket{i | k} \\
    = & \ \frac{1}{2^n} a_{ij}.
\end{align*}
which completes the proof.
\end{proof}

All circuit elements in \cref{fig:BEcirc}, except for $O_A$, can be readily 
written in simple 1- and 2-qubit gates. The complexity of the Hadamard and SWAP
gates is only poly$(n)$.
We present next how the oracle $O_A$ from
\Cref{def:OA} can be implemented in simple 1- and 2-qubit gates 
for arbitrary matrices. We first consider the cases of real-valued matrices
and discuss complex-valued matrices in~\Cref{sec:complex}.

In case that $A$ is a real-valued matrix, we see from \eqref{eq:OA}
that for given row and column indices $i$ and $j$, $O_A$ acts on the
$\ket0$ state of the first qubit as an $R_y$ gate with angle
\begin{equation}
\theta_{ij} = \arccos(a_{ij}),
\label{eq:theta}
\end{equation}
i.e.,
\begin{align}
R_y(2\theta_{ij}) \ket0 &=
\begin{bmatrix}
\cos(\theta_{ij}) &           -\sin(\theta_{ij}) \\
\sin(\theta_{ij}) & \phantom{-}\cos(\theta_{ij})
\end{bmatrix}
\begin{bmatrix} 1 \\ 0 \end{bmatrix}, \nonumber \\
 &=
\begin{bmatrix}
a_{ij} \\ \sqrt{\smash[b]{1 - a_{ij}^2}}
\end{bmatrix}.
\end{align}
Hence, the matrix query unitary $O_A$ is a matrix with the following structure
for real-valued matrices
\begin{equation}
O_A = 
\left[\begin{smallmatrix}
c_{00} & & & & -s_{00} \\
 & c_{01} & & & & -s_{01} \\
 & & \ddots & & & & \ddots \\
 & & & c_{N-1,N-1} & & & & -s_{N-1,N-1} \\
s_{00} & & & & c_{00} \\
 & s_{01} & & & & c_{01} \\
 & & \ddots & & & & \ddots \\
 & & & s_{N-1,N-1} & & & & c_{N-1,N-1}
\end{smallmatrix}\right],
\label{eq:OAmat}
\end{equation}
where $c_{ij} := \cos(\theta_{ij})$ and $s_{ij} := \sin(\theta_{ij})$,
with $\theta_{ij}$ given by \eqref{eq:theta}.

A first naive implementation of the $O_A$ oracle \eqref{eq:OA} 
for $A\inR[N][N]$ uses $N^2$ multi-controlled $R_y$ gates.
We use the notation $C^n(R_y)$ for an $R_y$ gate with $n$ control qubits.
The circuit construction for $O_A$ uses one $C_n(R_y)$ gate for each
matrix entry $a_{ij}$ where the control qubits encode the row and column 
indices $\ket{i}\ket{j}$ of the corresponding entry.
This circuit is illustrated below for the encoding of a 
$2 \times 2$ matrix using 3 qubits and 4 $C^2(R_y)$ gates
\[
\begin{myqcircuit}
 & \gate{R_y(2\theta_{00})} & \gate{R_y(2\theta_{01})} &
   \gate{R_y(2\theta_{10})} & \gate{R_y(2\theta_{11})} & \qw \\
 & \ctrlo{-1} & \ctrlo{-1} & \ctrl{-1}  & \ctrl{-1} & \qw \\
 & \ctrlo{-1} & \ctrl{-1}  & \ctrlo{-1} & \ctrl{-1} & \qw
\end{myqcircuit}
\]
where $\theta_{00}, \theta_{01}, \theta_{10}, \theta_{11}$ are given by
\eqref{eq:theta}.
It is clear that this circuit implements \eqref{eq:OA} for real-valued
data.
The major disadvantage of this naive approach is that it requires 
$N^2$ $C^{2n}(R_y)$ gates to implement the $O_A$ oracle for $A \inR[N][N]$.
However, every $C^{2n}(R_y)$ requires $\bigO(N^2)$
1- and 2-qubit gates to be implemented~\cite{Barenco1995}.
This brings the total gate complexity of this naive circuit implementation
to $\bigO(N^4)$ which is excessive as it is quadratically worse than the 
classical representation cost.
We propose a quadratic reduction in gate complexity with the
FABLE implementation of block-encoding circuits in the next section.

\section{FABLE Circuits for Block-Encodings}\label{sec:FABLE}

We continue with the case of real-valued matrices and discuss how to 
extend this to complex-valued matrices afterwards.

\subsection{Query oracles for real-valued matrices}

We illustrate the idea of the improved circuit construction for $O_A$
for a small-scale example of $A \inR[2][2]$ from the previous section.
In that case, the circuit structure is given by
\[
\begin{myqcircuit}
	& \gate{R_y(\hat \theta_0)} & \targ     & \gate{R_y(\hat \theta_1)} & \targ     & \gate{R_y(\hat \theta_2)} & \targ     & \gate{R_y(\hat \theta_3)} & \targ     & \qw \\
	& \qw                       & \qw       & \qw                       & \ctrl{-1} & \qw                       & \qw       & \qw                       & \ctrl{-1} & \qw \\
	& \qw                       & \ctrl{-2} & \qw                       & \qw       & \qw                       & \ctrl{-2} & \qw                       & \qw       & \qw
\end{myqcircuit}
\]
where the angles $\hat\theta_0, \hat\theta_1, \hat\theta_2, \hat\theta_3$
are computed from the data $A\inR[2][2]$ as we will explain next.
The circuit structure is derived from~\cite{Mottonen2004}.

We analyze the action of the above circuit based on the following two elementary
properties of $R_y$ rotations:
\begin{equation}
\begin{split}
R_y(\theta_0) \, R_y(\theta_1) & = R_y(\theta_0 + \theta_1), \\
X \, R_y(\theta) \, X & = R_y(-\theta),
\end{split}\label{eq:rotcond}
\end{equation}
It follows that the state of the first qubit is rotated as
\begin{align*}
00&: & \phantom{X} R_y(\hat \theta_3) \phantom{X} R_y(\hat \theta_2) \phantom{X}  R_y(\hat \theta_1) \phantom{X}  R_y(\hat \theta_0) = \qquad\qquad\qquad& \\ && R_y( \phantom{-}\hat\theta_3 + \hat\theta_2 + \hat\theta_1 + \hat\theta_0), \\
01&: & \phantom{X} R_y(\hat \theta_3) X R_y(\hat \theta_2) \phantom{X}  R_y(\hat \theta_1) X R_y(\hat \theta_0) = \qquad\qquad\qquad& \\ && R_y( \phantom{-}\hat\theta_3 - \hat\theta_2 - \hat\theta_1 + \hat\theta_0),\\
10&: & X R_y(\hat \theta_3) \phantom{X} R_y(\hat \theta_2) X R_y(\hat \theta_1) \phantom{X}  R_y(\hat \theta_0) = \qquad\qquad\qquad& \\ && R_y( - \hat\theta_3 - \hat\theta_2 + \hat\theta_1 + \hat\theta_0),\\
11&: & X R_y(\hat \theta_3) X R_y(\hat \theta_2) X R_y(\hat \theta_1) X R_y(\hat \theta_0) = \qquad\qquad\qquad& \\ && R_y( - \hat\theta_3 + \hat\theta_2 - \hat\theta_1 + \hat\theta_0),
\end{align*}
where the rotation angle depends on the state of the last two control qubits as indicated above.
To implement an $O_A$ oracle with angles $\theta_{00}$, $\theta_{01}$, $\theta_{10}$, $\theta_{11}$ 
as given by \eqref{eq:theta}, we \emph{vectorize} $A$ to $\vecc(A)$ in row-major order
such that $\vecc(A)_{i + j \cdot N} = a_{ij}$ to obtain relabeled angles
$(\theta_0, \ldots, \theta_3)$. We see from the system of equation above 
that these angles have to satisfy
\begin{equation}
\begin{bmatrix}
\theta_0 \\ \theta_1 \\ \theta_2 \\ \theta_3
\end{bmatrix}
=
\begin{bmatrix}
1 & \phantom{-}1 & \phantom{-}1 & \phantom{-}1 \\
1 & -1           & -1           & \phantom{-}1 \\
1 & \phantom{-}1 & -1           & -1           \\
1 & -1           & \phantom{-}1 & -1
\end{bmatrix}
\begin{bmatrix}
\hat\theta_0 \\ \hat\theta_1 \\ \hat\theta_2 \\ \hat\theta_3
\end{bmatrix}.
\end{equation}
This is a structured linear system that can be written as
\begin{align}
\begin{bmatrix}
\theta_0 \\ \theta_1 \\ \theta_2 \\ \theta_3
\end{bmatrix}
& =
\begin{bmatrix}
1 & \phantom{-}1 & \phantom{-}1 & \phantom{-}1 \\
1 & -1           & \phantom{-}1           & -1 \\
1 & \phantom{-}1 & -1           & -1           \\
1 & -1           & -1 & \phantom{-}1
\end{bmatrix}
\begin{bmatrix}
1 &   &   &   \\
  & 1 &   &   \\
  &   & 0 & 1 \\
  &   & 1 & 0
\end{bmatrix}
\begin{bmatrix}
\hat\theta_0 \\ \hat\theta_1 \\ \hat\theta_2 \\ \hat\theta_3
\end{bmatrix}, \\
& = 
( \hat H \otimes \hat H ) P_G
\begin{bmatrix}
\hat\theta_0 \\ \hat\theta_1 \\ \hat\theta_2 \\ \hat\theta_3
\end{bmatrix},
\label{eq:sys}
\end{align}
where $\hat H = \left[\begin{smallmatrix} 1 & \phantom{-}1 \\ 1 & -1 \end{smallmatrix}\right]$ is 
a scalar multiple of the Hadamard gate and $P_G$ is the permutation matrix that transforms binary ordering 
to Gray code ordering.

This algorithm generalizes to $O_A$ oracles for matrices $A \inR[N][N]$~\cite{Mottonen2004}.
The corresponding circuit structure consists of a gate sequence of length $2^{2n}$ alternating between
$R_y$ and CNOT gates.
Note that the $R_y$ gates only act on the first qubit,
which is also the target qubit of the CNOT gates,
and the control qubit for the
$\ell$th CNOT gate is determined by the bit where the $\ell$th and $(\ell + 1)$st Gray code differ.
For an $O_A$ oracle with angles $\bftheta = (\theta_0, \ldots, \theta_{2^{2n}-1})$ given by \eqref{eq:theta},
the angles of the $R_y$ gates in the quantum circuit, $\hat \bftheta = (\hat \theta_0, \ldots, \hat \theta_{2^{2n}-1})$, are related to $\bftheta$ through the linear system
\begin{align}
\left( \hat H^{\otimes 2n} \, P_G \right) \hat \bftheta = \bftheta.
\label{eq:ls}
\end{align}
This linear system can be efficiently solved by a classical algorithm in
$\bigO(N^2\log N^2)$ using a fast Walsh--Hadamard transform~\cite{fwht}
which is implemented in the reference implementation of FABLE.

This circuit structure is known as a \emph{uniformaly
controlled $R_y$ rotation}~\cite{Mottonen2004} because it rotates the
target qubit over a different angle for each bitstring in the control
register. We use the following concise notation for uniformly controlled
rotations used in the implementation of $O_A$: 
\[
\begin{myqcircuit*}{1}{0.75}
\lstick{} & \qw & \multigate{2}{O_A} &  \qw \\
\lstick{} & {/} \qw &        \ghost{O_A} &  \qw \\
\lstick{} & {/} \qw &        \ghost{O_A} &  \qw
\end{myqcircuit*}
\quad = \quad
\begin{myqcircuit}
& \qw & \gate{R_y(\bftheta)} & \qw \\
& {/} \qw & \ctrlsq{-1} & \qw \\
& {/} \qw & \ctrlsq{-1} & \qw
\end{myqcircuit}\ \ .
\]
The gate complexity of implementing $O_A$ with this approach
is $\bigO(N^2)$ for $A \inR[N][N]$, where $N^2$ CNOT and $N^2$
single-qubit $R_y$ gates are required, i.e., the prefactor in
the circuit complexity is 2.
This reaches the same asymptotic gate complexity as classically
required to store the data and is optimal for unstructured data.

\subsection{Query oracles for complex-valued matrices}
\label{sec:complex}
In case that $A$ is a complex-valued matrix, we encode 
the matrix elements as a product of $R_y$ and $R_z$ rotations.
For given row and column indices $i$ and $j$, 
the matrix element to be encoded is 
$a_{ij} = |a_{ij}| e^{\I \alpha_{ij}}$, and 
$O_A$ acts on the $\ket0$ state of the first qubit
as a product of an $R_y$ and $R_z$ gate with angles
\begin{align}
\theta_{ij} &= \arccos(|a_{ij}|), \label{eq:theta2}\\
  \phi_{ij} &= -\alpha_{ij},
\label{eq:thetarho}
\end{align}
i.e.,
\begin{align}
R_z(2\phi_{ij}) & R_y(2\theta_{ij}) \ket0 \nonumber \\
& =
\begin{bmatrix}
e^{-\I \phi_{ij}} & \\
& e^{\I \phi_{ij}}
\end{bmatrix}
\begin{bmatrix}
\cos(\theta_{ij}) &           -\sin(\theta_{ij}) \\
\sin(\theta_{ij}) & \phantom{-}\cos(\theta_{ij})
\end{bmatrix}
\begin{bmatrix} 1 \\ 0 \end{bmatrix}, \nonumber \\
& =
\begin{bmatrix}
|a_{ij}| e^{\I \alpha_{ij}} \\ \sqrt{\smash[b]{1 - |a_{ij}|^2}} e^{-\I \alpha_{ij}}
\end{bmatrix}.
\end{align}

Our previous analysis extends to uniformly
controlled $R_z$ rotations because the conditions in~\eqref{eq:rotcond}
are satisfied for $R_z$ gates~\cite{Mottonen2004}.
It follows that we can implement the $O_A$ oracle for
complex-valued matrices as the product of uniformly
controlled $R_y$ and $R_z$ rotations
\[
\begin{myqcircuit*}{1}{0.75}
\lstick{} & \qw & \multigate{2}{O_A} &  \qw \\
\lstick{} & {/} \qw &        \ghost{O_A} &  \qw \\
\lstick{} & {/} \qw &        \ghost{O_A} &  \qw
\end{myqcircuit*}
\quad = \quad
\begin{myqcircuit}
& \qw & \gate{R_y(\bftheta)} & \gate{R_z(\bfphi)} &\qw \\
& {/} \qw & \ctrlsq{-1} & \ctrlsq{-1} & \qw \\
& {/} \qw & \ctrlsq{-1} & \ctrlsq{-1} & \qw
\end{myqcircuit}\ \ ,
\]
where the $R_y$ and $R_z$ rotations respectively set the magnitude
and phase of the matrix elements. The corresponding 
rotation angles can be computed separately through two independent
linear systems of the form~\eqref{eq:ls} by using the magnitude and phase
of the matrix data, respectively.
Hence, the gate complexity for $O_A$ oracles of complex-valued matrices
is twice the cost of real-valued matrices, while the asymptotic complexity
in terms of 1- and 2-qubit gates remains $\bigO(N^2)$.

The complete FABLE circuits for the real and complex case with the $O_A$ 
oracles implemented as uniformly controlled rotations are given in \cref{fig:FABLE-detail}.

\begin{figure}[hbtp]
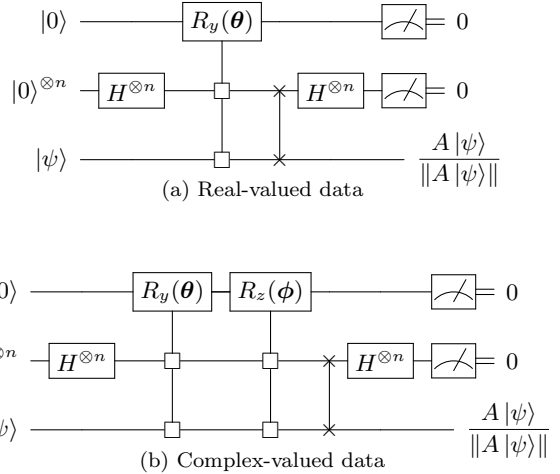

\centering
\subfloat[Real-valued data]{%
\(%
\begin{myqcircuit}[1.25]
\lstick{\ket{0}}             & \qw                  & \gate{R_y(\bftheta)} &
  \qw                        & \qw                  & \meter & \rstick{0}\cw \\
\lstick{\ket{0}^{\otimes n}} & \gate{H^{\otimes n}} &        \ctrlsq{-1} &
  \qswap                     & \gate{H^{\otimes n}} & \meter & \rstick{0}\cw \\
\lstick{\ket{\psi}}          & \qw                  &        \ctrlsq{-1} &
  \qswap\qwx                 & \qw                  &
  \rstick{\Frac{A\ket\psi}{\norm{A\ket\psi}}}\qw
\end{myqcircuit}%
\)%
}

\vspace{15pt}

\subfloat[Complex-valued data]{%
\(%
\begin{myqcircuit}[1.25]
\lstick{\ket{0}}             & \qw & \gate{R_y(\bftheta)} & \gate{R_z(\bfphi)}
  \qw                        & \qw & \qw            & \meter & \rstick{0}\cw \\
\lstick{\ket{0}^{\otimes n}} & \gate{H^{\otimes n}} & \ctrlsq{-1} & \ctrlsq{-1}&
  \qswap                     & \gate{H^{\otimes n}} & \meter & \rstick{0}\cw \\
\lstick{\ket{\psi}}          & \qw                  & \ctrlsq{-1} & \ctrlsq{-1}&
  \qswap\qwx                 & \qw                  &
  \rstick{\Frac{A\ket\psi}{\norm{A\ket\psi}}}\qw
\end{myqcircuit}%
\)%
}

\vspace{5pt}

\caption{FABLE quantum circuit structures for real and complex matrices
with $O_A$ oracles implemented as uniformly controlled rotations.\label{fig:FABLE-detail}}
\end{figure}

\section{Approximate Block-Encodings and Circuit Compression}
\label{sec:approx}

Thus far we have only considered exact implementations of $O_A$ resulting
in exact block-encodings. In this section, we focus the ``A'' in FABLE and
introduce \emph{approximate} block-encodings. We show how to compress
FABLE circuits and what the resulting approximation error of the
block-encoding is.

\subsection{Approximate block-encodings}
We begin with extending \Cref{def:BE} to approximate block-encodings
that only implement the target matrix up to a certain precision $\varepsilon$.

\begin{definition}
\label{def:ABE}
Let $a,n, m \inN$, $m = a + n$, and $\varepsilon\inR^+$.
Then an $m$-qubit unitary $U$ is an
${(\alpha, a, \varepsilon)}$-block-encoding of an $n$-qubit operator $A$ if
\begin{myequation}
\tilde A = \left(\bra{0}^{\otimes a} \otimes \eI_{n} \right) U
               \left(\ket{0}^{\otimes a} \otimes \eI_{n} \right),
\end{myequation}
and
\(
\normtwo[\big]{A - \alpha \tilde A} \leq \varepsilon.
\)
\end{definition}

The parameter $\varepsilon$ is the absolute \emph{error} on the block-encoding.
As before, we have that $\normtwo{\tilde A} \leq 1$ and
therefore $\normtwo{A} \leq \alpha + \varepsilon$.

\subsection{FABLE circuit compression and sparsification}

We illustrate the idea of the circuit compression algorithm for a uniformly controlled
rotation gate with 8 angles:
\[
\resizebox{\columnwidth}{!}{%
$
\begin{myqcircuit}
& \gate{\hat \theta_0}      & \targ    & \gate{\hat \theta_1} & \targ    & \gate{\hat \theta_2} & \targ    & \gate{\hat \theta_3} &  \targ    & \gate{\hat \theta_4}      & \targ    & \gate{\hat \theta_5} & \targ    & \gate{\hat \theta_6} & \targ    & \gate{\hat \theta_7} &  \targ    & \qw \\
& \qw                       & \qw      & \qw                  & \qw      & \qw                  & \qw      & \qw                  &  \ctrl{-1} & \qw                       & \qw      & \qw                  & \qw      & \qw                  & \qw      & \qw                  &  \ctrl{-1}      & \qw\\
& \qw                       & \qw      & \qw                  & \ctrl{-2} & \qw                  & \qw      & \qw                  &  \qw      & \qw                       & \qw      & \qw                  & \ctrl{-2} & \qw                  & \qw      & \qw                  &  \qw & \qw \\
& \qw                       & \ctrl{-3} & \qw                  & \qw      & \qw                  & \ctrl{-3} & \qw                  &  \qw      & \qw                       & \ctrl{-3} & \qw                  & \qw      & \qw                  & \ctrl{-3} & \qw                  &  \qw      & \qw \\
\end{myqcircuit}
$}
\]
For conciseness, we have omitted the labels from the rotation gates and only show their parameter.
The gates can be
$R_y$, $R_z$, or in general $R_{\alpha}$ gates with $\alpha$ a normalized linear combination of $\sigma_y$ and $\sigma_z$ because the conditions in~\eqref{eq:rotcond} are satisfied for all $R_{\alpha}$ gates~\cite{Mottonen2004}.

The compression algorithm uses a cutoff threshold $\delta_c \inR[+]$ and considers all $\hat\theta_i \leq \delta_c$ to be negligible.
Assume for the example above that $\hat\theta_2, \hat\theta_3, \hat\theta_4, \hat\theta_5, \hat\theta_6 \leq \delta_c$. This means that the respective
single-qubit rotations can be removed from the circuit, yielding
\[
\begin{myqcircuit}
& \gate{\hat \theta_0}      & \targ    & \gate{\hat \theta_1} & \targ    & \targ    &  \targ    & \targ    & \targ    & \targ    & \gate{\hat \theta_7} &  \targ    & \qw \\
& \qw                       & \qw      & \qw                  & \qw      & \qw      &  \ctrl{-1} & \qw      & \qw      & \qw      & \qw                  &  \ctrl{-1}      & \qw\\
& \qw                       & \qw      & \qw                  & \ctrl{-2} & \qw      &  \qw      & \qw      & \ctrl{-2} & \qw      & \qw                  &  \qw & \qw \\
& \qw                       & \ctrl{-3} & \qw                  & \qw      & \ctrl{-3} &  \qw      & \ctrl{-3} & \qw      & \ctrl{-3} & \qw                  &  \qw      & \qw \\
\end{myqcircuit}\ \ .
\]
The resulting circuit contains a series of consecutive CNOT gates that can be further simplified as they mutually commute and share the same target qubit.
Any two CNOT gates in a series of consecutive CNOT gates that have the same control qubit cancel out.
It follows that in the example circuit, 5 single qubit rotations and 4 CNOT gates can be removed,
yielding the compressed circuit:
\[
\begin{myqcircuit}
& \gate{\hat \theta_0}      & \targ    & \gate{\hat \theta_1} & \targ    &  \targ    & \gate{\hat \theta_7} &  \targ    & \qw \\
& \qw                       & \qw      & \qw                  & \ctrl{-1} &  \qw      & \qw                  &  \ctrl{-1}      & \qw\\
& \qw                       & \qw      & \qw                  & \qw      &  \qw      & \qw                  &  \qw & \qw \\
& \qw                       & \ctrl{-3} & \qw                  & \qw      &  \ctrl{-3} & \qw                  &  \qw      & \qw \\
\end{myqcircuit}\ \ .
\]
%
The circuit compression algorithm consists of two steps:
\begin{enumerate}
\item[i.] Remove all rotation gates for angles $\hat\theta_i \leq \delta_c$ in $\hat\bftheta$ from
the circuit;
\item[ii.] Perform a parity check on the control qubits of the CNOT gates in each series of consecutive CNOT gates:
keep one CNOT gate with control qubit $i$ if there are an odd number of CNOT gates with control qubit $i$ in the series,
otherwise remove all CNOT gates with control qubit $i$.
\end{enumerate}

This procedure can be considered as data sparsification since it allows us to represent
the block-encoded matrix with fewer than $N^2$ parameters. However, since we perform the
sparsification on the $\hat\bftheta$ angles after the Walsh--Hadamard transform, a sparse
representation of $\hat\bftheta$ does not typically mean that $\bftheta$ and $A$ are sparse
in the usual sense of containing many zeroes. FABLE circuits are efficient for the class
of matrices that are sparse in the Walsh--Hadamard domain.

The following theorem relates the cutoff threshold $\delta_c$ used in the 
circuit compression algorithm to the first-order error on the block-encoding as defined in \Cref{def:ABE}.
We consider the case of real-valued data.
\begin{theorem}
\label{thm:bound}
For an $n$-qubit matrix $A\inR[N][N]$, $|a_{ij}| \leq 1$,
the FABLE circuit
with cutoff compression threshold $\delta_c \inR[+]$
gives an $(1/2^n, n+1, N^3\delta_c)$-block-encoding
of $A$ up to third order in $\delta_c$.
\end{theorem}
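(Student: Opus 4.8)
The plan is to reduce the block-encoding error to a matrix perturbation and then track that perturbation through the Walsh--Hadamard linear system \eqref{eq:ls}. The starting observation is that compression does not destroy the gate template: dropping rotations with $|\hat\theta_i| \le \delta_c$ and cancelling CNOTs by parity merely replaces the angle vector $\hat\bftheta$ by a sparsified vector $\hat\bftheta'$, leaving a uniformly controlled $R_y$ rotation. Running the argument of \Cref{thm:UA} verbatim with $\hat\bftheta'$ in place of $\hat\bftheta$ then shows that the compressed circuit is an exact $(1/2^n, n+1)$-block-encoding of the matrix $A'$ with entries $a'_{ij} = \cos(\theta'_{ij})$, where the implemented effective angles are $\bftheta' = \hat H^{\otimes 2n} P_G \hat\bftheta'$. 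Hence $\alpha \tilde A = A'$, and the error of \Cref{def:ABE} collapses to $\normtwo{A - \alpha\tilde A} = \normtwo{A - A'}$, which is the quantity I would bound.

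Next I would propagate the cutoff through the transform. By construction the discarded angles obey $|\hat\theta_i| \le \delta_c$, so the perturbation $\Delta\hat\bftheta := \hat\bftheta - \hat\bftheta'$ has every entry bounded in modulus by $\delta_c$, and the resulting angle error is $\Delta\bftheta := \bftheta - \bftheta' = \hat H^{\otimes 2n} P_G \, \Delta\hat\bftheta$. Since every entry of $\hat H^{\otimes 2n}$ equals $\pm 1$ and each of its rows has $N^2 = 2^{2n}$ such entries, while $P_G$ is a permutation, each component satisfies $|\Delta\theta_{ij}| \le \sum_k |\Delta\hat\theta_k| \le N^2 \delta_c$.

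Finally I would transfer this to the matrix entries. Taylor expanding \eqref{eq:theta} gives $a_{ij} - a'_{ij} = \cos\theta_{ij} - \cos\theta'_{ij} = -\sin(\theta_{ij})\,\Delta\theta_{ij} + O(\Delta\theta_{ij}^2)$, so that after squaring, $(a_{ij} - a'_{ij})^2 = \sin^2(\theta_{ij})\,\Delta\theta_{ij}^2 + O(\Delta\theta_{ij}^3)$; the quadratic correction to the entry error only enters the squared error at cubic order, and since each $\Delta\theta_{ij} = O(\delta_c)$ this is precisely the ``up to third order in $\delta_c$'' that the statement refers to. Bounding $\normtwo{A - A'} \le \|A - A'\|_F$, using $\sin^2 \le 1$, and summing the $N^2$ entries yields $\|A - A'\|_F^2 \le N^2 (N^2 \delta_c)^2 + O(\delta_c^3) = (N^3\delta_c)^2$ to third order, whence $\normtwo{A - A'} \le N^3\delta_c$.

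The routine parts are the two norm passages ($\normtwo{\cdot} \le \|\cdot\|_F$ and the entrywise supremum bound) and the row-count of $\hat H^{\otimes 2n}$. The step that needs the most care is the order bookkeeping in $\delta_c$: one must expand at the level of the \emph{squared} entries, because only there does the neglected $O(\Delta\theta^2)$ piece of each $a_{ij}-a'_{ij}$ recede to cubic order and justify the stated third-order accuracy. I would also verify that the parity-cancellation step of the compression leaves the implemented angle vector exactly equal to $\hat\bftheta'$, so that no error beyond the dropped rotations is introduced.
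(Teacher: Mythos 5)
Your reduction and propagation steps follow the paper's own proof: the compression error becomes a perturbation $\bfdelta\hat\bftheta$ of the uniformly controlled rotation angles with $|\delta\hat\theta_i|\le\delta_c$, it is pushed through the linear system \eqref{eq:ls}, and the resulting angle error is converted into an entrywise error on the encoded matrix and finally into a spectral-norm bound via $\normtwo{A-A'}\le\|A-A'\|_F$. Your componentwise estimate $|\Delta\theta_{ij}|\le\|\bfdelta\hat\bftheta\|_1\le N^2\delta_c$ (from the $\pm1$ entries of $\hat H^{\otimes 2n}$ and the fact that $P_G$ is a permutation) is a harmless variant of the paper's spectral-norm estimate $\normtwo{\bfdelta\bftheta}\le\normtwo{\hat H^{\otimes 2n}}\,\normtwo{P_G}\,\normtwo{\bfdelta\hat\bftheta}\le N\cdot N\delta_c$; both give $N^2\delta_c$ per component. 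Your observation that the CNOT parity cancellation is an exact circuit identity, so the dropped rotations are the only error source, is also correct and implicit in the paper.

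The gap is in the last step, precisely where you flagged that care was needed, and your fix does not work as stated. Squaring does give $\|A-A'\|_F^2 \le (N^3\delta_c)^2 + \bigO(\delta_c^3)$, but a third-order statement about the \emph{squared} norm does not survive the square root: with $x=N^3\delta_c$ one has $\sqrt{x^2+C\delta_c^3} = x + C\delta_c^3/(2x)+\cdots = N^3\delta_c + \bigO(\delta_c^2)$. So your argument only establishes the bound up to \emph{second} order in $\delta_c$, weaker than the claim; the $\bigO(\Delta\theta_{ij}^2)$ remainder of the first-order Taylor expansion (namely $\tfrac12\cos(\theta_{ij})\Delta\theta_{ij}^2$) is never actually shown to be harmless at order $\delta_c^2$ in the norm itself. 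The paper avoids this by not Taylor-expanding at all: it uses the exact product-to-sum identity $\cos\theta_i-\cos(\theta_i+\delta\theta_i) = 2\sin(\delta\theta_i/2)\sin\bigl(\theta_i+\delta\theta_i/2\bigr)$, so that $|\delta a_i|\le 2|\sin(\delta\theta_i/2)|$, whose expansion $|\delta\theta_i|-|\delta\theta_i|^3/24+\cdots$ has no quadratic term (in fact $2|\sin(x/2)|\le|x|$ holds exactly). This yields the entrywise bound $|\delta a_i|\le N^2\delta_c$ with at worst a cubic correction, and summing over the $N^2$ entries gives $\|A-A'\|_F\le N^3\delta_c$ at the claimed order --- indeed the bound then holds exactly. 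Replacing your Taylor-plus-squaring step with this identity (or with the elementary inequality $|\cos x-\cos y|\le|x-y|$) repairs the proof.
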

\begin{proof}
In order to prove that a cutoff compression threshold $\delta_c$ leads to an absolute
error of at most $N^3 \delta_c + \bigO(\delta_c^3)$ on the the block-encoding,
we start with the linear system \eqref{eq:ls} that relates the angles of the
uniformly controlled rotations $\hat\bftheta$ to the angles of the matrix query oracle $\bftheta$.

After thresholding the parameters $\hat \bftheta$ with cutoff $\delta_c$,
the uniformly controlled rotation is constructed with parameters
$\hat\bftheta + \bfdelta\hat\bftheta$,
where $|\delta\hat\theta_i| \leq \delta_c$.
It follows that $\| \bfdelta\hat\bftheta \|_2 \leq N \delta_c$.
This perturbs the angles in $O_A$ from $\bftheta$ to
\[
\tilde\bftheta = (\hat H^{\otimes 2n} P_G) (\hat\bftheta + \bfdelta\hat\bftheta).
\]
By linearity, the error on $O_A$ thus becomes 
\[
\bfdelta\bftheta = \tilde\bftheta - \bftheta = (\hat H^{\otimes 2n} P_G) \bfdelta\hat\bftheta,
\]
and we get that
\[
\normtwo{\bfdelta\bftheta} \leq \normtwo{\hat H^{\otimes 2n}}  \normtwo{P_G} \normtwo{\delta\hat\bftheta} \leq N^2 \delta_c,
\]
as $P_G$ is a unitary matrix and $\normtwo{\hat H^{\otimes 2n}} = N$.
This implies that the element-wise error is now only bounded by $|\delta \theta_i| = |\theta_i - \tilde\theta_i| \leq N^2 \delta_c$.
This relates to the element-wise error on $a_i$ as:
\begin{align*}
    |\delta a_i| = |a_i - \tilde a_i|
               & = |\cos(\theta_i) - \cos(\tilde\theta_i)| \\
               & = |\cos(\theta_i) - \cos(\theta_i + \delta\theta_i)| \\
               & = |2 \sin(\delta \theta_i / 2)\sin(\theta_i + (\delta\theta_i/2))|\\
               & \leq 2 |\sin(\delta \theta_i / 2)| \approx N^2 \delta_c + \bigO(\delta_{i}^3).
\end{align*}
In the final approximation, we used a truncated series expansion.
We thus have that $\|\bfdelta\bfa\|_2 \leq N^3\delta_c$.
As $\|A\|_2 \leq \|A\|_F = \|\vecc(A)\|_2$ we get the upper bound.
\end{proof}

Numerical results that verify this error bound are presented in 
\cref{fig:bound} which shows the result of noise-free QCLAB~\cite{qclab} simulations
for compressed FABLE circuits. The FABLE circuits are generated for randomly generated
matrices with entries drawn from the standard normal distribution, the matrices are 2 to 
7 qubit operators such that the FABLE circuits require 5 to 15 qubits.
We observe that the bound from \Cref{thm:bound} always holds but is overly pessimistic.
Not shown in \cref{fig:bound} are the majority of random realizations with an error close
to the $10^{-16}$ machine precision.

A similar analysis can be performed for the complex-valued case
where the error on the magnitude and phase of the matrix data have to be considered independently.

\begin{figure}[t]
\input{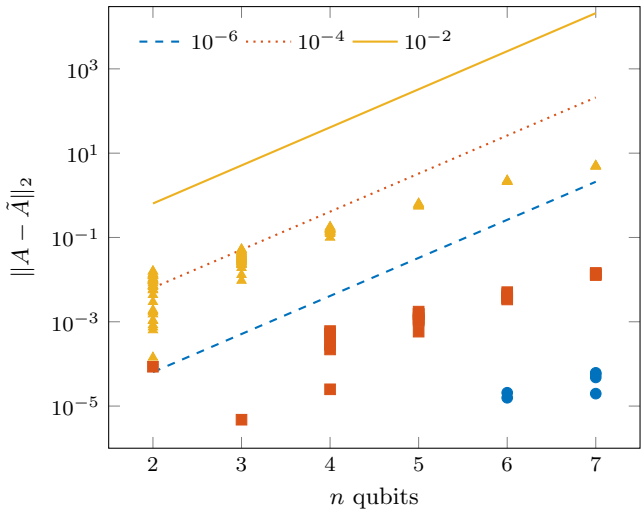}
\caption{Error on simulated data and theoretical error bound from \Cref{thm:bound} for randomly
generated real-valued 2 ($4\times4$) to 7 ($128 \times 128$) qubit matrices for three compression cutoffs:
for $\delta_c = 10^{-6}$
the upper bound is the dashed line and the simulated data are the circles,
for $\delta_c = 10^{-4}$
the upper bound is the dotted line and the simulated data are the squares, and for $\delta_c = 10^{-2}$
the upper bound is the full line and the simulated data are the triangles.\label{fig:bound}}
\end{figure}

\section{Discussion and examples}\label{sec:ex}

FABLE circuits provide a fast and convenient way of generating quantum circuits consisting 
of simple 1- and 2-qubit gates that block-encode arbitrary matrices. Thanks to their
versatility, we expect them to become very useful to run and benchmark quantum algorithms
derived from the QSVT for small to medium-scale experiments in the NISQ era.

The conditions imposed on the block-encoded matrix are minimal: a FABLE circuit exists
for any matrix
that satisfies $|a_{ij}| \leq 1$ as this is the only requirement for real rotation angles
to exist according to \eqref{eq:theta} and \eqref{eq:theta2}.
For matrices of small norm, the probability of a successful measurement can vanish.
In the extreme case of the all-zero matrix, a FABLE encoding exists, but the probability
of measuring the desired state will also be zero.

For the remainder of this section, we show the gate complexities of FABLE circuits
for three different model problems: a Heisenberg Hamiltonian, a Fermi--Hubbard Hamiltonian in 1D and 2D,
and a discretized Laplacian operator in 1D and 2D.
We have selected these example problems as they perform well within FABLE encodings
and require relatively few gates. However, we do indicate that for more complicated model
problems this is not necessarily the case, which highlights the limitations of our approach.

\subsection{Heisenberg Hamiltonians}
Block-encodings of Hamiltonians are of particular interest as they can be used in the QSVT for Hamiltonian simulation~\cite{gisu2018} or for preparing ground and excited states~\cite{Lin2020}.
Where previous theoretical analysis showed that these methods have optimal asymptotic scaling in terms of
oracle query complexity, we can now directly use FABLE to obtain an upper bound for the asymptotic gate complexity for specific Hamiltonians.

We study the performance of FABLE for block-encoding localized Hamiltonians.
Specifically, we are interested in Heisenberg type spin chains Hamiltonians
\begin{align*}
H = &\sum_{i=1}^{n-1} J_x \, X\p{i}X\p{i+1} + J_y \,  Y\p{i}Y\p{i+1} + J_z \, Z\p{i}Z\p{i+1} \\ & + \sum_{i=1}^{n} h_z Z\p{i},
\end{align*}
where $X\p{i}$, $Y\p{i}$, and $Z\p{i}$ are the Pauli matrices,
\begin{equation}
X =
\begin{bmatrix}
0 & 1 \\ 1 & 0
\end{bmatrix}, \quad
Y =
\begin{bmatrix}
0 & -\I \\ \I & 0
\end{bmatrix}, \quad
Z =
\begin{bmatrix}
1 & 0 \\ 0 & -1
\end{bmatrix},
\end{equation}
acting on the $i$th qubit.

We consider a Heisenberg XXX model where $J_x = J_y = J_z$, $h_z = 0$ for systems of $2, \ldots, 7$ qubits.
We set the compression threshold $\delta_c$ to $\epsilon_m$, with $\epsilon_m$ the machine precision .
The CNOT and $R_y$ gate complexities are summarized in \cref{fig:HeisXXX_gates}.
We observe that even with such a small compression threshold, FABLE can give an accurate
encoding of a Heisenberg XXX model with just $50\%$ of the CNOT gates and $25\%$ of the $R_y$
rotations.

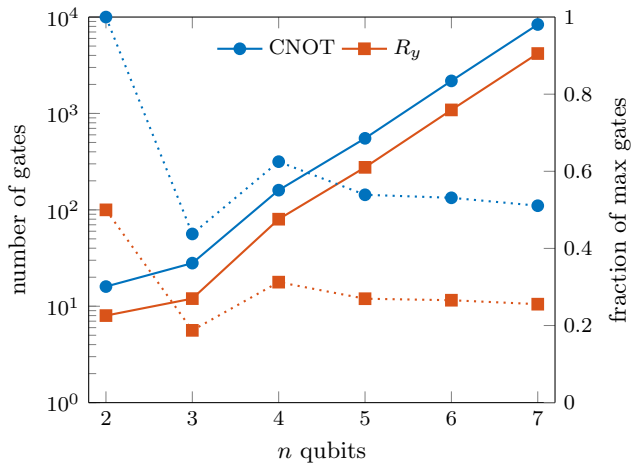
\begin{figure}[hbtp]
\begin{tikzpicture}

  
\begin{axis}[%
width=0.9\columnwidth,
axis y line*=left,
ylabel={number of gates},
ymode=log,
xmin=1.8,
xmax=7.2,
ymin=1,
ymax=1e4,
xtick={2, 3, 4, 5, 6, 7},
xlabel={$n$ qubits},
legend columns=2, 
legend style={at={(0.5,.97)}, anchor=north, legend cell align=left, align=left, draw=none}
]
\addplot [color=myblue, thick, mark=*, mark options={solid, myblue, fill=myblue}]
  table[row sep=crcr]{%
2	16\\
3	28\\
4	160\\
5	552\\
6	2176\\
7	8368\\
};
\addlegendentry{CNOT}

\addplot [color=myred, thick, mark=square*, mark options={solid, myred, fill=myred}]
  table[row sep=crcr]{%
2	8\\
3	12\\
4	80\\
5	276\\
6	1088\\
7	4184\\
};
\addlegendentry{$R_y$}
\end{axis}

\begin{axis}[%
width=0.9\columnwidth,
axis y line*=right,
axis x line=none,
xmin=1.8,
xmax=7.2,
ylabel={\small fraction of max gates},
ymin=0,
ymax=1,
ylabel near ticks,
]
\addplot [color=myblue, thick, dotted, mark=*, mark options={solid, myblue, fill=myblue}, forget plot]
  table[row sep=crcr]{%
2	1\\
3	0.4375\\
4	0.625\\
5	0.5390625\\
6	0.53125\\
7	0.5107421875\\
};
\addplot [color=myred, thick, dotted, mark=square*, mark options={solid, myred, fill=myred}, forget plot]
  table[row sep=crcr]{%
2	0.5\\
3	0.1875\\
4	0.3125\\
5	0.26953125\\
6	0.265625\\
7	0.25537109375\\
};
\end{axis}

\end{tikzpicture}%
\caption{CNOT (\emph{blue circles}) and $R_y$ (\emph{red squares}) gate complexity in function of $n$ for Heisenberg
XXX model without compression. The full lines show the absolute number of gates on the left $y$-axis and the dotted lines show
the fraction of the maximum number of gates ($4^n$) on the right $y$-axis.\label{fig:HeisXXX_gates}}
\end{figure}

More general Heisenberg models with different values for $J_x$, $J_y$ and $J_z$ or with an external
field ($h_z \neq 0$) can be encoded with FABLE, but their circuits cannot be compressed and sparsified
even though the corresponding matrices are sparse.

\subsection{Fermi--Hubbard model}

The second example we consider is a special case of the Fermi--Hubbard Hamiltonian,
\begin{equation}
H = - t \sum_{ij}\sum_{\sigma\in\lbrace \uparrow,\downarrow \rbrace} c^{\dagger}_{i\sigma}c_{j\sigma}
+ U \sum_i c^{\dagger}_{i\uparrow}c_{i\uparrow}c^{\dagger}_{i\downarrow}c_{i\downarrow},
\end{equation}
where $c^{\dagger}_{i\sigma}$ ($c_{i\sigma}$) is the creation (annihilation) operator for site $i$
and spin $\sigma$. The first term is the hopping term with strength $t$, the second term is the interaction term with strength $U$.
We generate the Hamiltonian through OpenFermion~\cite{openfermion} for the case $t=1$, $U=0$, i.e.,
non-interacting fermions on a 1D and 2D lattices with two spin orbitals per site and non-periodic boundary conditions.
The resulting Hamiltonians are mapped to qubits using the Bravyi--Kitaev transformation
and block-encoded with FABLE. The compression threshold is again set to $\epsilon_m$ and the
gate complexities of the FABLE circuits are listed in~\Cref{tab:hubbard}.
We observe that both in 1D and 2D the operators can be encoded with a relatively small
fraction of the maximum number of gates and that this fraction decreases for growing problem
size.
We ran the same experiment for interacting Hubbard Hamiltonians ($U \neq 0$) but for this
case, the matrices do not compress well and the maximum number of gates is required to
encode the Hamiltonians.

\begin{table*}[t]
\centering\small%
\begin{tabularx}{0.9\textwidth}{lcC|cC|cC}
\toprule
& & & \multicolumn{2}{c|}{CNOT} & \multicolumn{2}{c}{$R_y$} \\[5pt]
Model & Size & $n$ qubits & Gates & Fraction[\%] & Gates & Fraction[\%] \\
\midrule
\multirow{5}{*}{1D Hubbard} & 2 sites & 4 & 130 & 50.8 & 65 & 25.4 \\
                            & 3 sites & 6 & 1,098 & 26.8 & 513 & 12.5 \\
                            & 4 sites  & 8 & 6,666 & 10.2 & 3,073 & 4.7 \\
                            & 5 sites & 10 & 35,850 & 3.4 & 16,385 & 1.6\\
                            & 6 sites & 12 & 180,234 & 1.1 & 81,921 & 0.5\\[5pt]
\multirow{2}{*}{2D Hubbard} & $2 \times 2$ sites & 8 & 8,706 & 13.3 & 3,329 & 5.1\\
                            & $2 \times 3$ sites & 12 & 252,626 & 1.5 & 90,113 & 0.5\\
\bottomrule
\end{tabularx}
\caption{CNOT and $R_y$ gate complexities for 1D and 2D Hubbard models,
both in absolute number of gates and as a fraction of the maximum number
of gates ($4^n$).}
\label{tab:hubbard}
\end{table*}

\subsection{Elliptic partial differential equations}

As a second example we consider 1D and 2D discretized Laplace operators
which are relevant in the solution of elliptic partial differential equations
such as the Laplace equation $\Delta u = 0$ and the Poisson equation $\Delta u = f$.

We consider finite difference discretization of the second order derivatives. For example,
on a 1D equidistant grid with Dirichlet boundary conditions, we approximate the
second order derivate in point $x_j$ as:
\begin{equation}
u^{\prime\prime}_j \approx \Frac{u_{j+1} - 2u_j + u_{j-1}}{\Delta x^2},
\end{equation}
where $\Delta x$ is the step size.
This three-point stencil leads to the 1D discretized Laplace operator:
\begin{equation}
L_{xx} =
\begin{bmatrix}
2 & -1 & 0 & \cdots & * \\
-1 & 2 & -1 & \ddots & \vdots \\
0 & \ddots & \ddots & \ddots & 0 \\
\vdots & \ddots & -1 & 2 & -1 \\
* & \cdots & 0 & -1 & 2
\end{bmatrix},
\label{eq:laplace1D}
\end{equation}
where the entries $*$ in the lower-left and upper-right
corner are either both equal to $0$ for non-periodic boundary
conditions, or both equal to $-1$ for periodic boundary conditions.

In 2D, the discretized Laplace operator becomes the Kronecker sum
of discretizations along the $x$- and $y$-directions:
\begin{equation}
L 
= L_{xx} \oplus L_{yy}
= L_{xx} \otimes I + I \otimes L_{yy},
\label{eq:laplace2D}
\end{equation}
which corresponds to a five-point stencil.
This allows for periodic or non-periodic boundary conditions
and the number of discretization points can differ in both dimensions.
Second order derivatives are of great importance in quantum field
theory~\cite{QFT} and can be useful to simulate scalar fields with quantum
computers.

We generate 1D Laplacian matrices~\eqref{eq:laplace1D} on $2$ to $7$ qubits and
use FABLE to generate block-encoding circuits. The compression threshold is set
to $\epsilon_m$ such that an accurate block-encoding is obtained. The results are
summarized on the first row of \cref{fig:Laplace} for non-periodic and periodic
Dirichlet boundary conditions.
Similarly, the second row of \cref{fig:Laplace} shows the results of encoding 2D
Laplacians on different rectangular grids requiring at most $7$ qubits.

The following pertinent observations can be made from these results. First,
periodic boundary conditions lead to much reduced gate counts compared to
non-periodic boundary conditions. This is natural as there exist more structure
in the periodic Laplacians. Second, the 2D Laplacians can be compressed better
than 1D Laplacians and require in some cases fewer than 5\% of the maximum number
of gates. Third, for the 2D Laplacians, discretization grids with a smaller aspect
ratio, i.e., closer to a square, require in general fewer gates than more rectangular
grids for the same number of grid points.

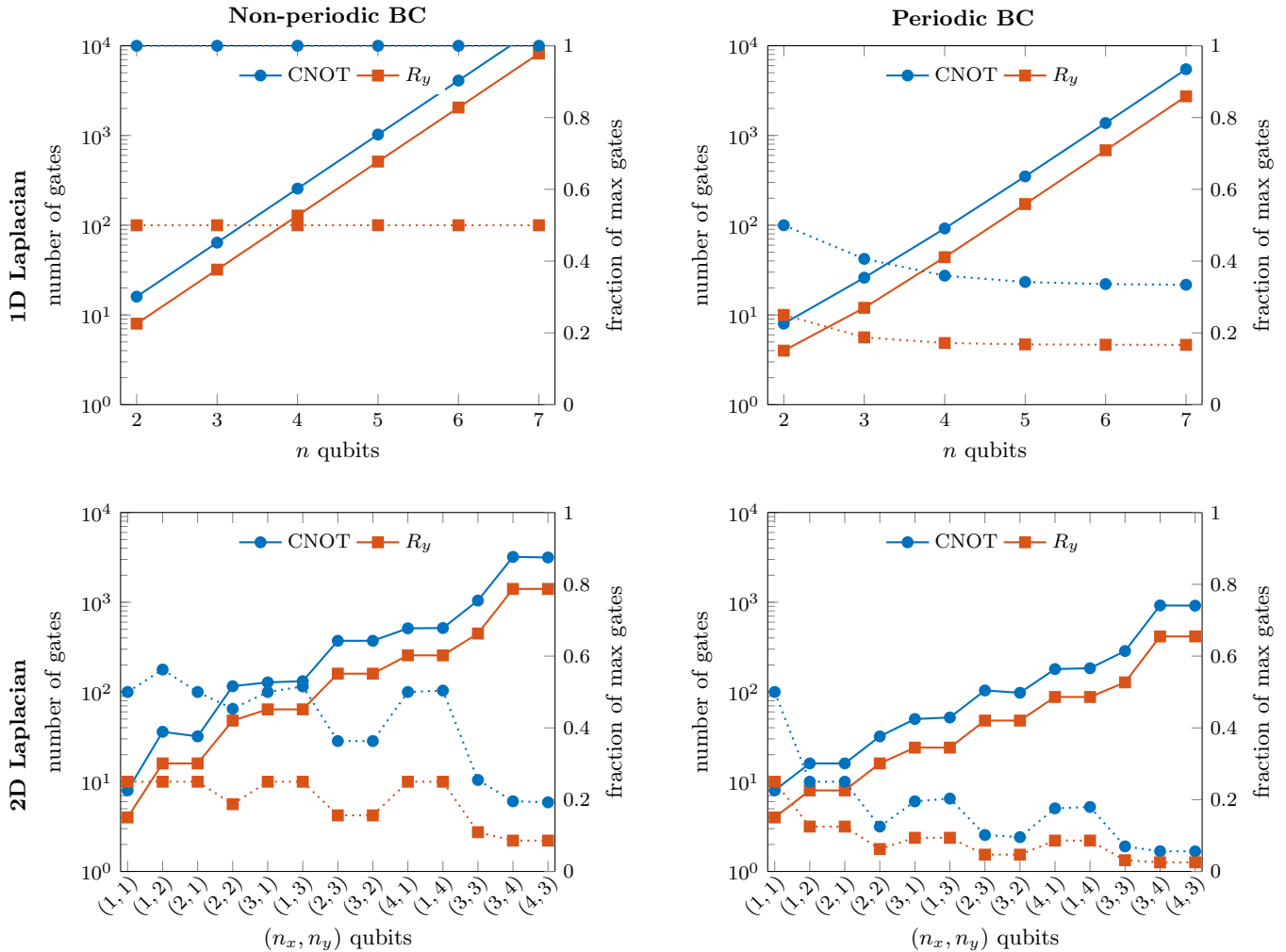
\begin{figure*}[t]
\centering
\begin{minipage}{0.5\textwidth}
\centering\small\textbf{Non-periodic BC}
\end{minipage}%
\hfill%
\begin{minipage}{0.5\textwidth}
\centering\small\textbf{Periodic BC}
\end{minipage}%

\rotatebox{90}{\qquad\qquad\qquad~\textbf{1D Laplacian}}
\begin{tikzpicture}

  
\begin{axis}[%
width=0.43\linewidth,
axis y line*=left,
ylabel={number of gates},
ymode=log,
xmin=1.8,
xmax=7.2,
ymin=1,
ymax=1e4,
xtick={2, 3, 4, 5, 6, 7},
xlabel={$n$ qubits},
legend columns=2, 
legend style={at={(0.5,.97)}, anchor=north, legend cell align=left, align=left, draw=none},
]
\addplot [color=myblue, thick, mark=*, mark options={solid, myblue, fill=myblue}]
  table[row sep=crcr]{%
2	16\\
3	64\\
4	256\\
5	1024\\
6	4096\\
7	16384\\
};
\addlegendentry{CNOT}

\addplot [color=myred, thick, mark=square*, mark options={solid, myred, fill=myred}]
  table[row sep=crcr]{%
2	8\\
3	32\\
4	128\\
5	512\\
6	2048\\
7	8192\\
};
\addlegendentry{$R_y$}
\end{axis}

\begin{axis}[%
width=0.43\linewidth,
axis y line*=right,
axis x line=none,
xmin=1.8,
xmax=7.2,
ylabel={\small fraction of max gates},
ymin=0,
ymax=1,
ylabel near ticks,
]
\addplot [color=myblue, thick, dotted, mark=*, mark options={solid, myblue, fill=myblue}, forget plot]
  table[row sep=crcr]{%
2	1\\
3	1\\
4	1\\
5	1\\
6	1\\
7	1\\
};
\addplot [color=myred, thick, dotted, mark=square*, mark options={solid, myred, fill=myred}, forget plot]
  table[row sep=crcr]{%
2	0.5\\
3	0.5\\
4	0.5\\
5	0.5\\
6	0.5\\
7	0.5\\
};
\end{axis}

\end{tikzpicture}%
\hfill%
\begin{tikzpicture}

  
\begin{axis}[%
width=0.43\linewidth,
axis y line*=left,
ylabel={number of gates},
ymode=log,
xmin=1.8,
xmax=7.2,
ymin=1,
ymax=1e4,
xtick={2, 3, 4, 5, 6, 7},
xlabel={$n$ qubits},
legend columns=2, 
legend style={at={(0.5,.97)}, anchor=north, legend cell align=left, align=left, draw=none}
]
\addplot [color=myblue, thick, mark=*, mark options={solid, myblue, fill=myblue}]
  table[row sep=crcr]{%
2	8\\
3	26\\
4	92\\
5	350\\
6	1376\\
7	5474\\
};
\addlegendentry{CNOT}

\addplot [color=myred, thick, mark=square*, mark options={solid, myred, fill=myred}]
  table[row sep=crcr]{%
2	4\\
3	12\\
4	44\\
5	172\\
6	684\\
7	2732\\
};
\addlegendentry{$R_y$}
\end{axis}

\begin{axis}[%
width=0.43\linewidth,
axis y line*=right,
axis x line=none,
xmin=1.8,
xmax=7.2,
ylabel={\small fraction of max gates},
ymin=0,
ymax=1,
ylabel near ticks,
]
\addplot [color=myblue, thick, dotted, mark=*, mark options={solid, myblue, fill=myblue}, forget plot]
  table[row sep=crcr]{%
2	0.5\\
3	0.40625\\
4	0.359375\\
5	0.341796875\\
6	0.3359375\\
7	0.3341064453125\\
};
\addplot [color=myred, thick, dotted, mark=square*, mark options={solid, myred, fill=myred}, forget plot]
  table[row sep=crcr]{%
2	0.25\\
3	0.1875\\
4	0.171875\\
5	0.16796875\\
6	0.1669921875\\
7	0.166748046875\\
};
\end{axis}

\end{tikzpicture}%

\vspace{10pt}

\rotatebox{90}{\qquad\qquad\qquad~\textbf{2D Laplacian}}
\begin{tikzpicture}

  
\begin{axis}[%
width=0.43\linewidth,
axis y line*=left,
ylabel={number of gates},
ymode=log,
xmin=0.8,
xmax=13.2,
ymin=1,
ymax=1e4,
xtick={1, 2, 3, 4, 5, 6, 7, 8, 9, 10, 11, 12, 13},
xticklabels={{$(1, 1)$}, 
             {$(1, 2)$}, 
             {$(2, 1)$}, 
             {$(2, 2)$}, 
             {$(3, 1)$}, 
             {$(1, 3)$},
             {$(2, 3)$},
             {$(3, 2)$},
             {$(4, 1)$},
             {$(1, 4)$},
             {$(3, 3)$},
             {$(3, 4)$},
             {$(4, 3)$},
             },
x tick label style={rotate=45, anchor=north east, inner sep=0mm},
xlabel={$(n_x, n_y)$ qubits},
legend columns=2, 
legend style={at={(0.5,.97)}, anchor=north, legend cell align=left, align=left, draw=none}
]
\addplot [color=myblue, thick, mark=*, mark options={solid, myblue, fill=myblue}]
  table[row sep=crcr]{%
1	8\\
2	36\\
3	32\\
4	116\\
5	128\\
6	132\\
7	372\\
8	372\\
9	512\\
10	516\\
11	1044\\
12	3204\\
13	3156\\
};
\addlegendentry{CNOT}

\addplot [color=myred, thick, mark=square*, mark options={solid, myred, fill=myred}]
  table[row sep=crcr]{%
1	4\\
2	16\\
3	16\\
4	48\\
5	64\\
6	64\\
7	160\\
8	160\\
9	256\\
10	256\\
11	448\\
12	1408\\
13	1408\\
};
\addlegendentry{$R_y$}
\end{axis}

\begin{axis}[%
width=0.43\linewidth,
axis y line*=right,
axis x line=none,
ylabel={\small fraction of max gates},
ymin=0,
ymax=1,
xmin=0.8,
xmax=13.2,
ylabel near ticks,
]
\addplot [color=myblue, thick, dotted, mark=*, mark options={solid, myblue, fill=myblue}, forget plot]
  table[row sep=crcr]{%
1	0.5\\
2	0.5625\\
3	0.5\\
4	0.453125\\
5	0.5\\
6	0.515625\\
7	0.36328125\\
8	0.36328125\\
9	0.5\\
10	0.50390625\\
11	0.2548828125\\
12	0.195556640625\\
13	0.192626953125\\
};
\addplot [color=myred, thick, dotted, mark=square*, mark options={solid, myred, fill=myred}, forget plot]
  table[row sep=crcr]{%
1	0.25\\
2	0.25\\
3	0.25\\
4	0.1875\\
5	0.25\\
6	0.25\\
7	0.15625\\
8	0.15625\\
9	0.25\\
10	0.25\\
11	0.109375\\
12	0.0859375\\
13	0.0859375\\
};
\end{axis}

\end{tikzpicture}
\hfill
\begin{tikzpicture}

  
\begin{axis}[%
width=0.43\linewidth,
axis y line*=left,
ylabel={number of gates},
ymode=log,
xmin=0.8,
xmax=13.2,
ymin=1,
ymax=1e4,
xtick={1, 2, 3, 4, 5, 6, 7, 8, 9, 10, 11, 12, 13},
xticklabels={{$(1, 1)$}, 
             {$(1, 2)$}, 
             {$(2, 1)$}, 
             {$(2, 2)$}, 
             {$(3, 1)$}, 
             {$(1, 3)$},
             {$(2, 3)$},
             {$(3, 2)$},
             {$(4, 1)$},
             {$(1, 4)$},
             {$(3, 3)$},
             {$(3, 4)$},
             {$(4, 3)$},
             },
x tick label style={rotate=45, anchor=north east, inner sep=0mm},
xlabel={$(n_x, n_y)$ qubits},
legend columns=2, 
legend style={at={(0.5,.97)}, anchor=north, legend cell align=left, align=left, draw=none}
]
\addplot [color=myblue, thick, mark=*, mark options={solid, myblue, fill=myblue}]
  table[row sep=crcr]{%
1	8\\
2	16\\
3	16\\
4	32\\
5	50\\
6	52\\
7	104\\
8	98\\
9	180\\
10	184\\
11	286\\
12	922\\
13	916\\
};
\addlegendentry{CNOT}

\addplot [color=myred, thick, mark=square*, mark options={solid, myred, fill=myred}]
  table[row sep=crcr]{%
1	4\\
2	8\\
3	8\\
4	16\\
5	24\\
6	24\\
7	48\\
8	48\\
9	88\\
10	88\\
11	128\\
12	416\\
13	416\\
};
\addlegendentry{$R_y$}
\end{axis}

\begin{axis}[%
width=0.43\linewidth,
axis y line*=right,
axis x line=none,
ylabel={\small fraction of max gates},
xmin=0.8,
xmax=13.2,
ymin=0,
ymax=1,
ylabel near ticks,
]
\addplot [color=myblue, thick, dotted, mark=*, mark options={solid, myblue, fill=myblue}, forget plot]
  table[row sep=crcr]{%
1	0.5\\
2	0.25\\
3	0.25\\
4	0.125\\
5	0.1953125\\
6	0.203125\\
7	0.1015625\\
8	0.095703125\\
9	0.17578125\\
10	0.1796875\\
11	0.06982421875\\
12	0.0562744140625\\
13	0.055908203125\\
};
\addplot [color=myred, thick, dotted, mark=square*, mark options={solid, myred, fill=myred}, forget plot]
  table[row sep=crcr]{%
1	0.25\\
2	0.125\\
3	0.125\\
4	0.0625\\
5	0.09375\\
6	0.09375\\
7	0.046875\\
8	0.046875\\
9	0.0859375\\
10	0.0859375\\
11	0.03125\\
12	0.025390625\\
13	0.025390625\\
};
\end{axis}

\end{tikzpicture}%

\caption{CNOT (\emph{blue circles}) and $R_y$ (\emph{red squares}) gate complexity for exact FABLE block-encodings of 1D and 2D Laplacian matrices with periodic and non-periodic boundary conditions.
The full lines show the absolute number of gates on the left $y$-axis and the dotted lines show
the fraction of the maximum number of gates ($4^n$) on the right $y$-axis.\label{fig:Laplace}}
\end{figure*}

\subsection{Quantum image encodings}
The circuit construction in FABLE is based on the concept of uniformly controlled rotations introduced in~\cite{Mottonen2004}.
This circuit construction method was recently used in the QPIXL framework
that unifies quantum image representations~\cite{amankwah2021}.
In all proposed quantum image representations and quantum image processing algorithms,
the image is encoded in a quantum state and the circuit implementation becomes a state
preparation problem.
FABLE, and block-encodings in general, can be directly used to encode image data
by embedding it in the unitary operator itself.
This alternative to previously proposed quantum image encodings could potentially
have benefits for certain quantum image processing tasks as it trivially preserves
the 2D structure in the image.

\section{Conclusion}\label{sec:conc}

In this paper, we have introduced FABLE, an algorithm for \emph{fast}
generation of quantum circuits that \emph{approximately} block-encode
arbitrary target matrices. 
FABLE circuits obtain the optimal asymptotic gate complexity for generic
dense operators and can be efficiently compressed for many problems of interest.
Circuit compression and sparsification can lead to a significant reduction
in gate complexity. More specifically, matrices that are sparse in the Walsh--Hadamard
domain can in general be efficiently encoded.
An interesting future research direction would be to precisely characterize the class of matrices
that are sparse in the Walsh--Hadamard domain as these have great potential for 
successful experimental realizations of quantum algorithms based on block-encodings.

We analyzed the relation between the compression threshold and the approximation error
in the block-encoding and provide an upper bound on this error in \Cref{thm:bound}
which shows a linear relation between the compression threshold, the problem size, and
the approximation error.
Our numerical simulations
show that this bound can be saturated for larger thresholds and problem sizes.

We illustrated FABLE on example problems ranging from Heisenberg and Hubbard Hamiltonians
to discretized Laplacian operators. These examples show that high compression levels are
feasible for certain structured problems, but we have observed that more general Hamiltonians
with more interaction terms do not compress well. This highlights the limitation of our direct
approach to encoding the matrix data.
Balancing direct FABLE encodings 
of smaller terms in the Hamiltonian expression with a
\emph{Linear Combination of Unitaries} (LCU)~\cite{Berry2015c} circuit construction
in order to combine the smaller terms into
a large-scale block-encoding can potentially mitigate this issue.
This is another future research direction.

\section*{Acknowledgment}

The authors would like to thank
Michael Kreschkuk and David Williams-Young
for their input and discussions which
have improved the quality of this work.

This work was supported by the Laboratory Directed Research and Development Program of Lawrence Berkeley National Laboratory and used resources of the National Energy Research Scientific Computing Center (NERSC), a U.S. Department of Energy Office of Science User Facility located at Lawrence Berkeley National Laboratory, operated under Contract No. DE-AC02-05CH11231.

\bibliographystyle{apsrev4-2}
\bibliography{references}

\end{document}